\theoremstyle{plain}
\newtheorem{proposition}{Proposition}
\newtheorem{definition}{Definition}
\newtheorem{remark}{Remark}
\newlength{\nomitemorigsep}
\xpatchcmd{\thenomenclature}{%
  \section*{\nomname}
}{\vspace{-1em}
}{\typeout{Success}}{\typeout{Failure}}
\renewcommand*{\nompreamble}{\begin{multicols}{2}}
\renewcommand*{\nompostamble}{\end{multicols}}
\newcommand{\R}{{\mathbb R}}
\def\eg{\emph{e.g., }}
\def\ie{\emph{i.e., }} 
\def\dist{0.7em}
\DeclareMathOperator*{\diag}{diag}
\newcommand{\WC}{{S_{\alpha,j}}}       
\newcommand{\SC}{{S_{\alpha,j}^*}}       
\newcommand{\Con}{{G_{\alpha,j}}}      
\begin{document}
\title{Application of Correlation Indices on Intrusion Detection Systems: Protecting the Power Grid Against Coordinated Attacks}
\author{Christian~Moya,~\IEEEmembership{Student~Member,~IEEE,}
		~Junho~Hong,~\IEEEmembership{Member,~IEEE,}        
        ~and~Jiankang~Wang,~\IEEEmembership{Member,~IEEE}
\thanks{C. Moya and J.K. Wang are with the Department of Electrical and Computer Engineering, The Ohio State University, Columbus, OH, 43210 USA e-mail: \{moyacalderon.1,wang.6536\}@osu.edu.}
\thanks{J. Hong is with the Energy and Automation Department, ABB US Corporate Research Center, Raleigh, NC, 27519 USA e-mail: junho.hong@us.abb.com.}}

\maketitle
\begin{abstract}
The future power grid will be characterized by the pervasive use of heterogeneous and non-proprietary information and communication technology, which exposes the power grid to a broad scope of cyber-attacks. In particular, Monitoring-Control Attacks~(MCA) --\ie attacks in which adversaries manipulate control decisions by fabricating measurement signals in the feedback loop-- are highly threatening. This is because, MCAs are (i)~more likely to happen with greater attack surface and lower cost,~(ii) difficult to detect by hiding in measurement signals, and (iii)~capable of inflicting severe consequences by coordinating attack resources. To defend against MCAs, we have developed a semantic analysis framework for Intrusion Detection Systems~(IDS) in power grids. The framework consists of two parts running in parallel: a Correlation Index Generator~(CIG), which indexes correlated MCAs, and a Correlation Knowledge-Base~(CKB), which is updated aperiodically with attacks' Correlation Indices~(CI). The framework has the advantage of detecting MCAs and estimating attack consequences with promising runtime and detection accuracy. To evaluate the performance of the framework, we computed its false alarm rates under different attack scenarios.     
\end{abstract}
\begin{IEEEkeywords}
Power Grid, Cyber-Physical Systems, Cyber-Security, Monitoring-Control Attacks, Intrusion Detection Systems. 
\end{IEEEkeywords}
\IEEEpeerreviewmaketitle
\section{Introduction}
\IEEEPARstart{T}{he} power grid is evolving with increasing dependency on Information and Communication Technologies~(ICT). Today, ICT is realized in energy control centers through Supervisory Control and Data Acquisition~(SCADA) systems and Energy Managment Systems~(EMS). While EMSs make commands for power grid operation, SCADA systems serve as the gateway between EMS and field networks by passing measurements and control commands. The present SCADA is in the fourth generation of architectures, which bring innovative and cost-efficient solutions, such as cloud computing and Internet of Things, while opening up a much wider scope of cyber-security concerns among utilities~\cite{zhu2010scada}. Since the notorious Stuxnet attack to Siemens SIMATIC WinCC SCADA system in July 2010, approximately 45,000 cases of SCADA infection around the world have been reported, including the Iranian nuclear facilities and the Ukrainian power grid, according to Symantec's statistics~\cite{lee2016analysis}. These attacks, if successful, would lead to massive power outages, resulting in severe physical, economic, and social impacts.       

Intrusion Detection Systems~(IDS) are redeemed critical to protecting SCADA from cyber-attacks. In contrast to those methods aiming at strengthening the perimeter surrounding SCADA, IDSs generate `burglar alarms' whenever the security of the system is compromised \cite{hasan2016optimal}. To increase the chances of mounting a successful defense, the Department of Homeland Security recommends a combination of firewalls, De-militarized Zones, and IDSs grounded on the principle of defense-in-depth \cite{kuipers2006control}. 

While IDSs for traditional ICT systems are mature, implementing IDS in industrial control systems, such as power grids' SCADA, is facing unprecedented challenges in twofold. First, the power grid is a cyber-physical system, wherein continuity of operation is critical. Unlike traditional ICT systems, in which the effects of false alarms are limited to computer operations, false alarms in power grids would disrupt dependent vital physical processes and inflict severe consequences. Therefore, false positive (which falsely generates alarms for normal actions) is unacceptable whereas low false negative rate is desired. Second, the power grid is a real-time dynamical system. Any delay of control actions could lead to instabilities from local plant angle instability to inter-area oscillation~\cite{wang2017analysis}. In the extremity, delayed response of protective devices will cause cascading blackouts over a large scale. For this reason, propagation latency of control and measurement signals induced from IDS audit and process must be minimized. 

To address the first challenge, recent works develop IDS by integrating contextual information of power grids\cite{lin2016runtime,pan2015developing,vellaithurai2015cpindex,sridhar2014model,sun2016coordinated,liu2010security,vrakopoulou2015cyber}. The most common approach is to identify attacks based on their impact on power grids. For example, in \cite{vrakopoulou2015cyber}, Bayesian network models for the whole cyber-infrastructure and underlying power grids are constructed based on SCADA logs along with power network topological information. Power contingencies are then simulated on the Bayesian model to rank the severity of a detected cyber-intrusion. In \cite{sun2016coordinated,ten2008vulnerability,lin2016runtime}, IDSs audit and select packets that contain control commands, which (dis)connect grid components, \eg generators, transmission lines and substations. Cyber-attacks are identified if the power flow diverges in simulation under those control commands.  

Another approach is to calibrate the detection results in cyber-space with historical data of power grid operation, wherein data mining techniques are often applied. For example, deviations between current and historical Area Control Errors are used as indicators of cyber-attacks to Automatic Generation Control in EMS~\cite{sridhar2014model}. A hybrid IDS is developed in \cite{pan2015developing} that learns temporal state-based specifications for power grid scenarios of physical disturbances, cyber-attacks, and normal operations. 

However, both of these approaches share the common deficiency of requiring a long runtime, exacerbating the second challenge. While the former approach simulates power grids' response, which is a non-trivial task given the enormous size of power networks and the number of grid devices, the latter approach relies on frequent auditing and processing historical data over a sufficiently long period in order to ensure the desired accuracy. These put a high requirement on IDS accounting resources and could significantly reduce IDSs' performance in timely processing and propagating the information to grid functions and responsible defense authorities. 

Despite initial attempts on reducing IDS runtime in~\cite{lin2016runtime,ten2016cyber}, they are restricted to certain attack groups, wherein attacks are aimed at individual grid components and assume a single step in the cyber-physical causal chain (\ie adversaries directly disconnect grid devices through remote control); they are not able to handle more sophisticated attacks that are coordinated and through EMS. These attacks are defined as \emph{Monitoring-Control Attacks (MCA)} and considered highly threatening \cite{song2017smart}, because they are (i) more likely to happen with greater attack surface and lower attack cost, (ii)~difficult to detect by hiding in measurement signals and masquerading through EMS, and (iii) capable of inflicting much more severe consequences at a greater scale by coordinating attack resources targeting at multiple grid components. Although MCAs' attack mechanisms and physical impacts have been studied in a few works \cite{xie2010false,liang2016vulnerability,wang2016attack,jk0216attack}, there is no effective IDS solution available to defend against MCAs.    

To bridge this gap, this paper presents a semantic analysis framework for IDSs in power grids, which detects MCAs with promising runtime and detection accuracy. The framework is implemented as two parts running in parallel in IDS: a Correlation Index Generator (CIG), which indexes correlated attacks, and a Correlation Knowledge-Base (CKB), which is updated aperiodically with attacks' Correlation Indices~(CI). In addition, this paper makes the following contribution:
\begin{itemize}
\item A theoretical basis for CIG. We formulate MCAs as a bi-level mix-integer optimization program and solve it to provide CI solutions.
\item A suite of detection rules for CKB. Derived from set theory, these rules characterize the relation between adversaries' goals and coordinated attacks, thus enabling CKB to detect MCAs at runtime.
\item Defense strategies against MCAs. While most IDSs are passive, that is, they only generate ``burglar alarms'', our proposed method actively derives defense strategies against MCAs using a set-theoretic approach.    
\end{itemize}

The rest of the paper is organized as follows. Section II introduces the threat model, MCAs mechanisms and IDS implementation of the proposed semantics framework. Section III presents the mathematical model of power grids and MCAs. The theoretic basis for CIG and detection rules for CKB are derived in Section IV and V. In Section VI, the performance of proposed semantic framework is demonstrated with numerical experiments. Finally, all results of this paper are concluded in Section VII. While the proposed framework is capable of defending against less sophisticated attacks, such as control attacks, we elaborate the framework's working principle mainly based on MCAs in this paper.    
\vspace{-\dist}
\section{Background}
In order to develop the semantic analysis framework for IDSs in power grids, we need to consider three factors: the environment in which intrusions occur (the threat model), the intrusions we wish to detect (MCAs), and the intrusion detector (IDS implementation).
\vspace{-\dist}
\subsection{Threat Model}
\noindent
In the previous generations, SCADA activities were basically confined to proprietary networks. In contrast, the current fourth generation of SCADA is mostly internet-based, as illustrated in Fig.~\ref{fig:SCADA}. In particular, a large amount of measurement signals from transducers of grid equipment (\eg relays, generators and switch gears) are transmitted with raw data protocol in field networks \cite{hasan2016optimal}. This widens the cyber-attack surface in the following attack entry points as numbered in Fig.~\ref{fig:SCADA}~\cite{zhu2010scada}: 
\begin{enumerate}
\item[(1)] Directly hack into field devices, including transducers, actuators and meters. 
\item[(2)] Attack field network links between devices and from devices to Energy Control Centers~(ECC).
\item[(3)] Attack from inside of the ECC. This could happen within or external of the security enclaves, which boundaries are defined by the trust nodes (\eg firewall and IDS) \cite{yang2014multiattribute}. 
\item[(4)] Attack from inside enterprises functions or attack at its perimeter networks. 
\end{enumerate} 

\begin{figure}[t]
\centering
\includegraphics[width= .45\textwidth, height = 4.1 in]{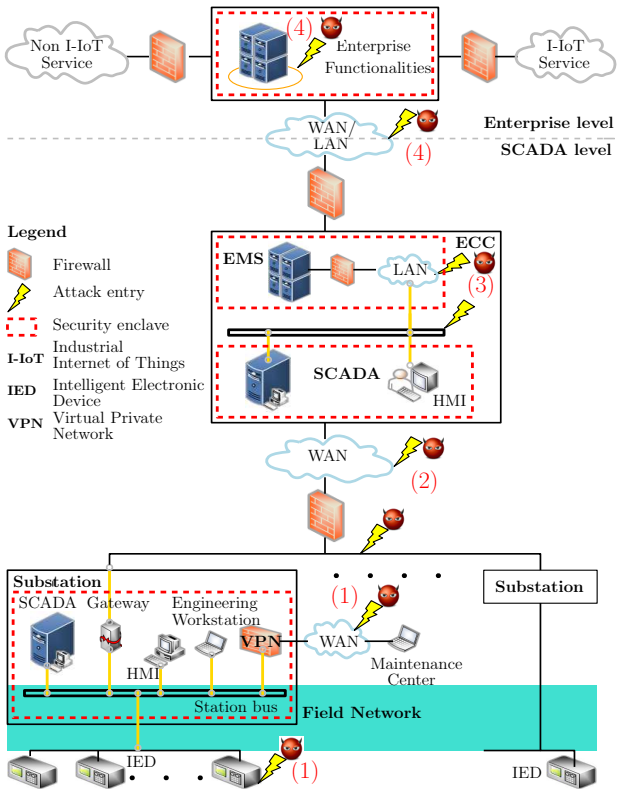}
\caption{EMS/SCADA Power Grid and Attack Entry Points.}
\label{fig:SCADA}
\end{figure} 

Through these chanels, adversaries can install malware, sniff, inject and modify host files and network traffic \cite{liu2009cybersecurity,zhu2010scada,ten2007vulnerability}. Based on the above fact, we make the following assumptions about the threat model: 
\begin{enumerate}
\item Adversaries can remotely penetrate the Local Area Network (LAN) and Wide Area Network (WAN). Though insider attacks outside security enclaves are allowed under the proposed framework, it is not our focus. We do not consider insider attacks within the security enclaves.
\item In ECC, we trust EMS. In other words, attacks are only executed on packets containing control and measurement signals that are transmitted over the network; they do not damage the EMS functions nor alter its encoded working principles.
\item IDSs are secure (\ie not compromised). In addition, we assume there are separate computing machines dedicated to IDSs that implement the proposed semantic analysis framework. Therefore, IDSs do not introduce extra vulnerabilities into power grids. 
\item IDS communication is secure. In other words, IDSs can safely exchange data.   
\item We do not consider attacks through enterprises functions. Launching MCAs through this path, though theoretically possible, is much more likely to fail due to extra layers of trust nodes.
\end{enumerate}
\vspace{-\dist}
\subsection{Monitoring-Control Attacks}
There are two clases of attack mechanisms in power grids, control attacks and monitoring attacks \cite{song2017smart}. They are illustrated in with a generic control diagram in Fig.~\ref{fig:attacks}. Control attacks refer to attacks that directly hijack and falsify control commands in power grids, such as disconnecting transmission lines and changing the power output of generators \cite{liu2009cybersecurity,lin2016runtime,ten2016cyber}. While able to inflict immediate physical consequences, they are less likely to occur in practice due to the restricted communication channels and easiness of detection. For example in conventional substations, relay commands, which trigger circuit breakers, are usually transmitted over proprietary communication channels or hard wire connection; generator power adjustments are requested through Human Machine Interface~(HMI), where operators would block and report suspicious actions. 
\begin{figure}[t]
    \centering 
    \subfloat[][Measurement attack]{\includegraphics[width=0.25\textwidth, height=1.05in]{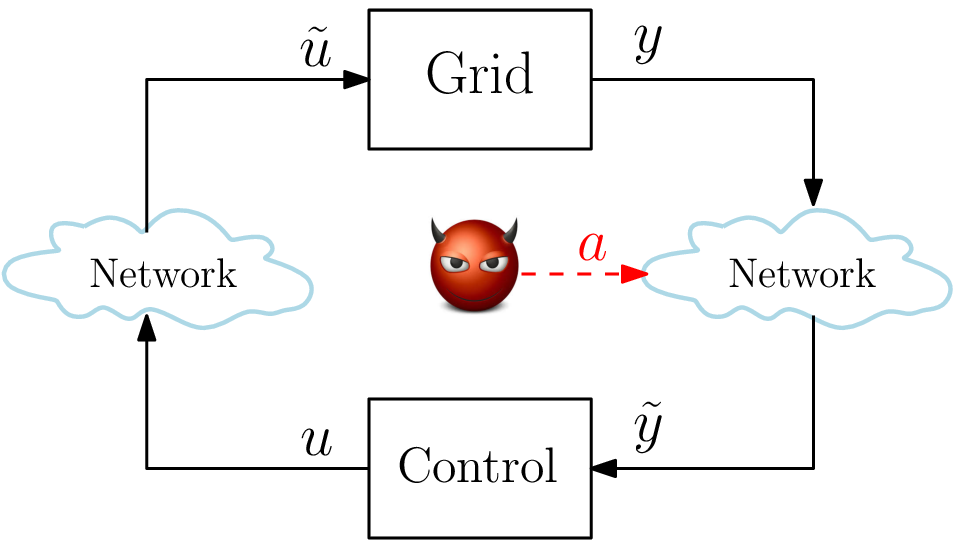}\label{fig:mattk}}
    ~
    \subfloat[][Control attack]{\includegraphics[width=0.25\textwidth, height=1.05in]{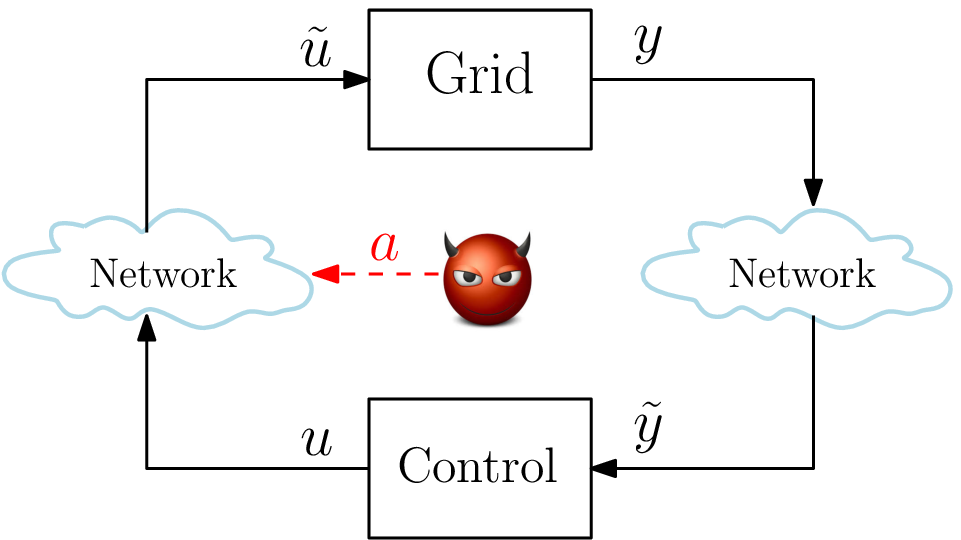}\label{fig:ctrl}}
    \caption{Control and measurement attacks. $u:$ control command, $y:$ measurements. $u \neq \tilde{u}$ ($y \neq \tilde{y}$) during the cyber-attack, where $\tilde{u}$ and $\tilde{y}$ are the corrupted control and measurement signals.}
    \vspace{-\dist}
    \label{fig:attacks}
\end{figure}

Monitoring attacks contaminate or eavesdrop measurements collected from transducers. In contrast to control commands, measurement signals have been more often transmitted over open-communication channels (\ie without any available authentication method) due to their large transmission volume and high transmission rate. This opens a wider cyber-surface to attacks. An important subset of monitoring attacks is \emph{Monitoring-Control Attacks}, in which adversaries manipulate control decisions by fabricating measurement signals in the feedback loop. On one hand, MCAs are difficult to detect, since the attack goals are hidden behind measurements and the control mechanisms. Thus, they cannot be inspected and intervened by human operators. On the other hand, they can inflict severe consequence by coordinating attack resources targeting at many measurements simultaneously; they are different from non-disruptive monitoring attacks that only exploit private information. Therefore, MCAs are considered highly threatening.    

MCAs' mechanism in power grids is briefed next. Main control functions of the power grid are realized through EMS, which consists of four blocks: network model-building (including topology processor and state estimation), security assessment, automatic generation control, and dispatch. Information flows within EMS are shown in Fig.~\ref{fig:AttackPaths}. In path 1, contaminated measurements drive control decisions in automatic generation control and dispatch after going through network-building models. While state estimation could effectively correct and identify bad data, a rich body of literature has demonstrated that contaminated measurements can still be injected through when the measurement errors are within the tolerance and/or the measurements are structure-wise conforming \cite{liu2011false,liufalse,liang2014cyber}. In path 2, contaminated measurements directly drive control decisions, as it is common for system operators to make a decision based on raw measurements in security constrained dispatch. Through both paths, adversaries may realize goals, such as depriving profit in electricity markets, disturbing power grid frequency and overloading grid equipment, causing tremendous financial losses, sabotaging, or even interrupting continuous grid operation.   

\begin{figure}[t]
\centering
\includegraphics[width= .40\textwidth, height = 1.70 in]{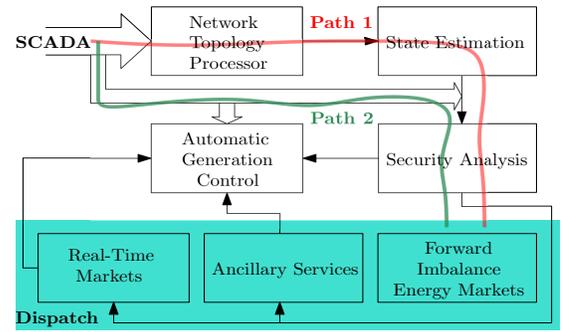}
\caption{Attack Paths on~EMS. In path~1, the adversary go through State Estimation and its screening methods. In path~2, the adversary can inject any attack signal that deceives the operator.}
\label{fig:AttackPaths}
\end{figure}
\vspace{-\dist}
\subsection{IDS Implementation}
\subsubsection{Proposed Framework in IDS Architecture} A general IDS architecture is defined with four modules, Event (E-blocks), Analysis (A-blocks), Database (D-blocks), and Response (R-blocks), as shown in Fig.~\ref{fig:IDSCIGCKB}~\cite{ord1998common}. The proposed semantic analysis framework has two parts: Correlation Index Generator (CIG) and Correlation Knowledge Base (CKB). They are aimed to provide contextual information of power grids additional to the traits that IDS sensed in the cyber-space (\eg host syslog and network traffic). 

\begin{figure}[t]
\centering
\includegraphics[width= .38\textwidth, height = 1.25 in]{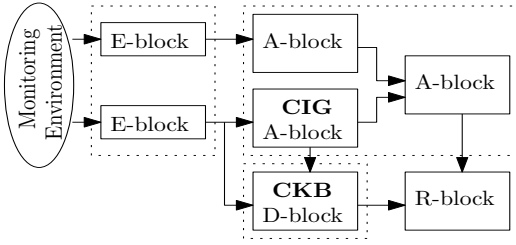}
\caption{Proposed IDS working principles~\cite{ord1998common}. E-blocks are Event blocks/IDSs' sensors, A-blocks are analysis blocks, D-blocks are database blocks, and R-blocks are response or mitigation blocks. In particular, CIG is an A-block and CKB is a D-block.}
\label{fig:IDSCIGCKB}
\end{figure}

CIG, depicted in Fig.~\ref{fig:CIG}, belongs to A-blocks. It analyzes the correlation of the potential hostile behaviors sensed by E-blocks, and indexes these behaviors with inductive-deductive patterns. For example, if a set of measurements are suspected to be contaminated, CIG first induces their consequence on the power grid with optimal power flow. If a transmission line is overloaded, then these measurements are \emph{weakly correlated}. Next, CIG deduces the critical measurements required to overload the transmission line. These critical measurements are \emph{strongly correlated} and will be represented by a set of Correlation Indices~(CI). The inductive-deductive patterns ensure minimal false negative rates that might be caused by normal deviations, such as noises and faults. In addition, CIG can be used to protect critical grid assets from MCAs, in which case CIs can be directly deduced from the predicted failures of these assets. Details about CIG are provided in Section~IV.

\begin{figure}[t]
\centering
\includegraphics[width= .40\textwidth, height = 2.20 in]{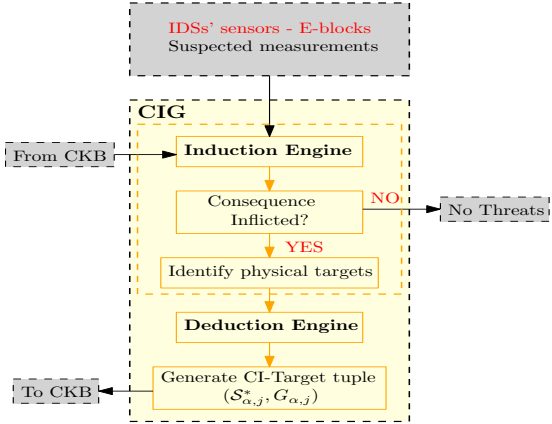}
\caption{Correlation Index Generator~(CIG).}
\label{fig:CIG}
\end{figure}

CKB, depicted in Fig.~\ref{fig:CKB}, belongs to D-blocks. It is updated with the CIs generated from CIG at an adaptive rate, which is determined by (i) configuration change of power networks, (ii) power grid stress level, (iii) detection rate of potential hostile events of E-blocks, and (iv) human operator's settings. At runtime, measurements detected by E-blocks are compared with the CIs in CKB. If the comparison is positive, then these measurements are considered forming an MCA. This information is passed to other A-blocks and R-blocks for further response. Since CKB does not contain any computation function, apart from arithmetic operation for CI comparison, it allows fast contextual information integration in IDSs. Details about CKB are provided in Section~V.

\begin{figure}[t]
\centering
\includegraphics[width= .43\textwidth, height = 2.20 in]{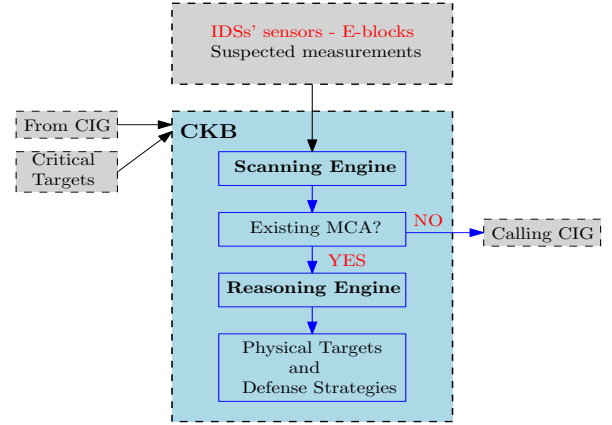}
\caption{Correlation Knowledge Base~(CKB).}
\label{fig:CKB}
\vspace{-\dist}
\end{figure}

Derived from set theory, defense strategies are proposed for R-blocks. The design of E-blocks is out of the scope of this paper. 
\subsubsection{IDS Dimensions} We consider two dimensions of IDS implementation related to the proposed semantic analysis framework. The proposed framework is flexible in implementation in the other dimensions, such as audit source (\ie host- or network-based detection), audit frequency and continuity, which definitions are given in the survey \cite{debar1999towards}. 
            
\noindent
\textbf{Detection Approach.} There are two main detection approaches in IDS development: signature- and anomaly-based. In between these approaches lie the probabilistic- and specification-based methods \cite{Axelsson2000,zhu2010scada,debar1999towards}. All of these approaches are based on \emph{direct} knowledge of cyber-activities (\ie host syslog and network traffic). In complementary, behavioral detection approaches capture the patterns, which are not necessarily illegitimate in a direct setting but wrong in a \emph{contextual} setting as a secondary evidence. The proposed analysis framework belongs to the last class and will be implemented with other direct knowledge-based approaches in IDS. 

\noindent
\textbf{Distributed v.s. Centralized.} The proposed analysis framework can be implemented under centralized, distributed or hierarchical structure of IDS. Provided the cost and communication constraints in power grids, we consider IDSs are only installed at the substation level and above, but not at individual Intelligent Electronic Devices or Remote Terminal Units. Thus, under a centralized structure, the proposed framework will allow IDS at a substation to detect and identify MCAs within its service area. For MCAs across service areas under multiple substations, a distributed structure is needed, wherein IDS at substations have peer-to-peer communication so that detected events can be exchanged. Alternatively, a hierarchical structure can be formed. The proposed analysis framework is integrated at a master IDS, which supervises all the substation IDSs by collecting, analyzing their detected events and sending instructions for detected MCAs.  
\vspace{-\dist}

\section{Mathematical Models}
In this section, we model the power grid, dispatch applications, and Monitoring Control Attacks. 
\vspace{-\dist}
\subsection{Mathematical Notation}
Throughout this paper, we use the following mathematical notation. Let $\R$ and $\R_{\geq 0}$ (resp. $\R_{>0}$) denote the set of real numbers and the set of non-negative (resp. positive) real numbers. We let $\mathbf{1}$ and $\mathbf{0}$ denote, respectively, the vectors or matrices with all components equal to one and zero. Given a finite set $V$, we let $|V|$ denote its cardinality, \ie the number of elements of $V$, and $2^V$ the power set of $V$, \ie the set of all subsets of $V$.

For a matrix $A \in \R^{n \times m}$, we let $[A]_i$ denote its $i$th row. For a vector $x \in \R^n$, $x_i$ denotes its $i$th element, $\diag(x)$ the diagonal matrix of $x$, and $||x||_0$ the zero-norm of $x$, \ie the number of non-zero elements of $x$. 
\vspace{-\dist}
\subsection{Power Grid Model}
We model the power grid as the graph~$G = (V,E)$, where $V$ is the set of $n$ buses and $E \subset V \times V$ is the set of $m$ transmission lines. To each bus~$i \in V$, we associate the demand (or consumption) $P_{d,i} \in \R_{\geq 0}$. In addition, let $V_g \subset V$ denote the set of $n_g$ buses with dispatchable generation. To each generator bus $i \in V_g$, we associate the power generated~$P_{g,i} \in \R_{>0}$. Similarly, to each transmission line~$l:=(i,j) \in E$ connecting buses $i$ and $j$, we associate the power flow $P_{f,l} \in \R$. In vector form, the demand, generation, and power flows are, respectively, $P_d = [P_{d,1},P_{d,2},\hdots,P_{d,n}]^\top$, $P_g = [P_{g,1},P_{g,2},\hdots,P_{g,n_g}]^\top$, and $P_f = [P_{f,1},P_{f,2},\hdots,P_{d,m}]^\top$. 

The power grid is assumed to have a set of $n_s$ substations, \ie $S: = \{s_1,s_2,\hdots,s_{n_s}\}$. We model the power grid within substation~$s_k$'s service area as the sub-graph $G_{s_k} = (V_{s_k},E_{s_k})$ with the following properties:
\begin{enumerate}
\item All substations' service areas compose the power grid, \ie $G = \cup_{s_k \in S} G_{s_k}$.
\item Substations' service areas might overlap, \ie for some $s_j,s_k \in S$, we may have $G_{s_j} \cap G_{s_k} \neq \emptyset$.
\item The overlapped areas do not contain generator buses.
\item Each substation~$s_k$ collects demand measurements, denoted as $\tilde{P}_d \in \R^n_{\geq 0}$, within its service area, \ie all $\tilde{P}_{d,i}$ such that $i \in V_{s_k}$.
\end{enumerate}
\vspace{-\dist}
\subsection{Dispatch Application Model}
\textit{Dispatch applications} in EMS compute the generation output for the grid, denoted as $P_g^+ \in \R^{n_g}_{>0}$, by observing demand measurements~$\tilde{P}_d$ and using security constrained optimal power flows. These applications are triggered based on a guard condition (\ie a boolean condition). This guard condition is enabled by a security assessment algorithm (which usually involves network model-building), or by a system operator during real-time and contingency dispatch. Examples include generation dispatch in Real-Time Markets and Ancillary Services (see Fig.~\ref{fig:AttackPaths}).

Dispatch applications are based on the active and reactive power flow model, which describes how power balances on buses and flows on transmission lines. However, computing this coupled power flow may become computationally intractable for large-scale power grids. For this reason, the decoupled DC power flow is commonly adopted by operators when the power grid is in the normal status~\cite{dorfler2013novel}. The linearity and sparsity in the DC power flow allows much faster computation.

We formulate the security constrained DC optimal power flow as a convex optimization problem that minimizes the generation cost~\eqref{eq:SecOPFCost}, balances generation and demand~\eqref{eq:SecOPFBalance}, and keeps the generation~\eqref{eq:SecOPFMaxGen} and power flows~\eqref{eq:SecOPFMaxFlow} within operational limits, \ie
\begin{subequations} \label{eq:SecOPF}
\begin{align}
\Omega(\tilde{P}_d):~\min_{P_g} \quad & \frac{1}{2} P_g^\top C_2 P_g + c_1^\top P_g +c_0, \label{eq:SecOPFCost} \\
					 \text{s.t.} \quad & \mathbf{1}^\top P_g - \mathbf{1}^\top \tilde{P}_d = 0, \label{eq:SecOPFBalance}\\
                  & P_g \in [\mathbf{0},\bar{P}_g], \label{eq:SecOPFMaxGen}\\
                  & \underbrace{F(\Pi_g P_g - \tilde{P}_d)}_{=: P_f} \in [-\bar{P}_f,\bar{P}_f], \label{eq:SecOPFMaxFlow} 
\end{align}
\end{subequations}
where $c_0,c_1,c_2 \in \R^n_{\geq 0}$ are the cost coefficients for generators, $C_2 = \diag (c_2)$, $\bar{P}_g \in \R^n_{\geq 0}$ is the rated power from generators, $\bar{P}_f \in \R^m_{\geq 0}$ is the thermal capacity of transmission lines, $F \in \R^{m \times n}$ is the generator shift matrix, and $\Pi_g \in \{0,1\}^{n \times n_g}$ is a matrix that maps generator buses to buses.   

Thus, given the demand measurements~$\tilde{P}_d$, an optimal solution $P_g^+ \in \Omega(\tilde{P}_d)$ corresponds to the new generation output for the grid.  
\vspace{-\dist}
\subsection{Attack Model}
In this subsection, we define MCAs, attack goals, and attack constraints. We also describe two types of MCAs: strongly and weakly correlated. 
\subsubsection*{Monitoring Control Attacks} MCAs aim to manipulate dispatch applications in EMS. In an MCA, adversaries hack into substations' ICT. The corrupted measurements are modeled as follows:
\begin{align} \label{eq:MeasAttacks}
\tilde{P}_d(a) = P_d +a, 
\end{align} 
where $a \in \R^n$ denotes the difference between the attack signal $\tilde{P}_d(a)$ and the actual signal $P_d$.

\subsubsection*{Attack Goal} The adversary uses these MCAs to manipulate~\eqref{eq:SecOPF}, so the new (deceived) generation output~$P_g^+(a) \in \Omega(\tilde{P}_d(a))$ increases the power flows on a set of target lines~$L \subset E$. Therefore, the attack goal is denoted as,
\begin{align} \label{eq:AttackGoal}
\Con := \{(l,\tau_l): [F]_l(\Pi_g P_g^+(a) - P_d) \geq (1+\tau_l) P_{f,l}(0)\}.
\end{align}
where $\Con \subset E \times \R_{>0}$ is the attack goal, $[F]_l$ is the $l$th row of the generation shifting matrix, $P_{f,l}(0)$ denotes the power flow on line~$l \in L$ before the MCA, and $\tau_l \in \R_{>0}$ quantifies the flow increase on $l \in L$. We choose this flow increase $\tau_l$ with semantics, including the flow increase that congests a transmission line or trips the line's protection.

\subsubsection*{Attack Constraints} MCAs are constrained based on the path they take on EMS. If the attack takes path 1 (see Fig.~\ref{fig:AttackPaths}), the MCA gets through state estimation and its data screening method. If the attack takes path 2 (see Fig.~\ref{fig:AttackPaths}), the MCA must take any value that deceives the operator. In any case, we can model this constraint as
\begin{align} \label{eq:AttackConst}
a \in [-\bar{a},\bar{a}].
\end{align}
In the above, $\bar{a} \in \R_{\geq 0}^n$ is the vector of max values allowed for the attack signal. We can use this vector to design different attack scenarios.
\begin{remark}
The constraint for path 1 can take a form that explicitly describes the condition under which measurement attacks get trough state estimation and its data screening methods. These methods, however, are not used during real-time and contingency dispatch (Path 2).
\end{remark}

MCAs are also constrained by defense at substations. If the grid's operator deploys defense at substation~$s_k$, the adversary cannot corrupt its measurements. We model this constraint as
\begin{align} \label{eq:MeasCorrup}
a_i \in \delta_{s_k}[-\bar{a}_i,\bar{a}_i],~\forall i \in V_{s_k},~ \forall s_k \in S,~\delta_{s_k} \in \{0,1\}.
\end{align} 
where $\delta_{s_k} = 1$ if measurements at substation $s_k$ are corruptible and $\delta_{s_k} = 0$ if not. The vector $\delta = [\delta_{s_1},\delta_{s_2},\hdots,\delta_{s_{n_s}}]^\top$ describes target and safe substations during MCAs.  Using~$\delta$, we can identify the set of target/attacked substations as follows 
\begin{align*}
\WC := \{s_k \in S~:~\delta_{s_k} = 1\} \in 2^S.
\end{align*}
Note that we can also use~\eqref{eq:MeasCorrup} to model the desire (for the adversary) to attack substation~$s_k$.

Finally, MCAs are constrained by the adversary's resources. If the adversary has limited resources, (s)he can only attack (hack) a limited number of substations. We model this constraint as
\begin{align} \label{eq:MaxAttackedSubs}
||\delta||_0 \leq \kappa,~\kappa \in \{1,2,\hdots,n_s\}.
\end{align}
In the worst case scenario for the operator, the adversary minimizes $\kappa$.

\subsubsection*{Types of Coordinated MCAs} Since the power grid is built with redundant measurements, attacking measurements in a single substation may not induce any consequence. In other words, effective MCAs are usually launched as a coordinated effort, which consists of temporally and spatially correlated events. Given the attack goal $\Con$, we classify coordinated MCAs as strongly and weakly correlated. \textit{Strongly Correlated MCAs}, denoted as $\SC \in 2^S$, achieve~$\Con$ by attacking the least number of substations. Strongly correlated MCAs describe attacks with minimum resources and allow us to predict attack consequences and derive defense implications. In Section IV, we will introduce a formal method to model and study strongly correlated MCAs. On the other hand, \textit{Weakly Correlated MCAs}, denoted as $\WC \in 2^S$, achieve~$\Con$ by attacking more substations than needed. Adversaries execute weakly correlated MCAs to probe defense at substations.
\vspace{-\dist}
\section{Correlation Index Generator}
In this section, we describe the working principles of the Correlation Index Generator (see Fig.~\ref{fig:CIG}) and its components, namely the Induction Engine and the Deduction Engine.
\vspace{-\dist}
\subsection{Induction Engine}
Suppose the E-blocks detected an MCA~$\WC \in 2^S$ that is not in CKB and has corrupted measurements $\tilde{P}_d(a)$. The \textit{induction engine} computes the new (deceived) generation output $P_g^+(a)$ by solving $\Omega(P_d+a)=: \Omega(\tilde{P}_d(a))$, \ie 
\begin{align*}
\Omega(\tilde{P}_d(a)):~\min_{P_g} \quad & \frac{1}{2} P_g^\top C_2 P_g + c_1^\top P_g +c_0,  \\
					 \text{s.t.} \quad & \mathbf{1}^\top P_g - \mathbf{1}^\top(P_d+a) = 0, \\
                  & P_g \in [\mathbf{0},\bar{P}_g], \\
                  & F(\Pi_g P_g - (P_d+a)) \in [-\bar{P}_f,\bar{P}_f].  
\end{align*}

Then, using $P_g^+(a) \in \Omega(\tilde{P}_d(a))$, the \textit{induction engine} determines the set of attack consequences, \ie the set $\Con$. As shown in \eqref{eq:AttackGoal}, the set of consequences~$\Con$ depends on $\tau_l$ and $P_d$. The flow increase $\tau_l$ is chosen with semantics and the real consumption~$P_d$ is obtained as follows.
\begin{align*}
P_{d,i}:= \begin{cases}
\tilde{P}_{d,i},\quad &\text{if } i \not \in V_{s_k},~\forall s_k \in \WC, \\
P_{d,i}^{\text{pre}},\quad &\text{otherwise},
\end{cases}
\end{align*}
where $P_{d,i}^{\text{pre}}$ is a (conservative) estimated consumption or a redundant measurement.
\vspace{-\dist}
\subsection{Deduction Engine}
Given the set of consequences inflicted~$\Con$, the \textit{deduction engine} computes strongly correlated MCAs that reach $\Con$ using the following bilevel mix-integer optimization program:
\vspace{-\dist}
\begin{subequations} \label{eq:AttackTemplate}
\begin{align}
\min_{P_g^{+},a,\kappa,\delta} \quad & \kappa,\\
\text{s.t.} \quad &\text{equations}~\eqref{eq:MeasAttacks}-\eqref{eq:MaxAttackedSubs}, \\
                  & P_g^+ \in \Omega(\tilde{P}_d(a)).
\end{align}
\end{subequations}

In our previous work~\cite{MoyaCIs}, we derived a method that addresses the mathematical challenges of~\eqref{eq:AttackTemplate} and computes strongly correlated MCAs. The method first computes the security index, which corresponds to the optimal solution~$\kappa^*$. This security index describes the minimum number of substations the adversary must attack to reach~$\Con$. Then, the method determines the target and safe substations during the MCA from the optimal solution $\delta^*$. Since $\delta^*$ is not necessarily unique, we proposed in~\cite{MoyaCIs} an algorithm to determine all feasible solutions $\delta^*$ such that $||\delta^*||_0 = \kappa^*$. All these $\delta^*$ correspond to strongly correlated MCAs associated with the attack goal~$\Con$.

We use a set-theoretic approach to describe all these strongly correlated MCAs, which we define as Correlation Indices.
\begin{definition}
Let $\delta^*$ denote a feasible solution of~\eqref{eq:AttackTemplate} associated with~$\Con$ such that $||\delta^*||_0 = \kappa^*$. A \textit{Correlation Index}~(CI), denoted as $\SC$, is a strongly correlated MCA that extracts target substations from $\delta^*$ as follows
\begin{align*}
S^*_{\alpha,j}:=\{s_k \in S~:~\delta^*_{s_k} \neq 0\} \in 2^S,
\end{align*}
and inflicts the consequences described by~$\Con$.
\end{definition}
The set of all CIs associated with the inflicted consequences~$\Con$ is given by $\mathcal{S}_{\alpha,j}^*:=\{\SC~:~\SC~\text{is a CI}\}$.

As a result, the CIG generates a \textit{CI-target tuple} $(\mathcal{S}_{\alpha,j}^*, \Con)$ --\ie the set of strongly correlated MCAs and the associated inflicted consequences-- and sends this CI-target tuple to the Correlation Knowledge-Base~(CKB). 
\vspace{-\dist}
\section{Correlation Knowledge-Base}
In this section, we describe the working principles of the Correlation Knowledge-Base~(CKB) (see Fig.~\ref{fig:CKB}) using a set-theoretic approach. The CKB has a Scanning Engine and a Reasoning Engine.
\vspace{-\dist}
\subsection{Scanning Engine}
Suppose the E-blocks detected a (possibly weakly correlated) MCA~$\WC$. The \textit{Scanning Engine} verifies if~$\WC$ is an existing MCA, \ie if $\WC \in \text{CKB}$. The MCA~$\WC$ is an existing MCA if
\begin{enumerate}
\item The MCA is a CI (or strongly correlated MCA), \ie $\WC \in \mathcal{S}^*_{\alpha,j}$ for some $\mathcal{S}^*_{\alpha,j} \in \text{CKB}$.
\item The MCA is a weakly correlated MCA but a superset of at least one CI, \ie $\exists \SC \subset \WC$ such that $\SC \in \text{CKB}$.
\item The MCA is uncorrelated, is a subset of at least one CI, \ie $\exists \SC \supset \WC$ such that $\SC \in \text{CKB}$, and has less cardinality than all CIs in CKB, \ie $|\WC| < |\SC|$ for all $\SC \in \text{CKB}$.
\end{enumerate}

If $\WC$ is an existing MCA, then CKB uses the reasoning engine to identify physical targets and derive defense strategies. Otherwise, CKB calls the CIG to analyze $\WC$.
\vspace{-\dist}
\subsection{Reasoning Engine}
The \textit{reasoning engine} identifies physical targets and derives defense strategies for the detected MCA~$\WC$. Technically, the reasoning engine is an R-block (see Fig.~\ref{fig:IDSCIGCKB}) and can work also with CIG to derive defense strategies. 

To identify physical targets associated with~$\WC$, we proceed as follows. 
\begin{enumerate}
\item  If the MCA~$\WC$ is a CI, then the physical targets are described by the set of inflicted consequences~$\Con$.
\item  If the MCA~$\WC$ is a weakly correlated MCA that contains a set of $q \geq 2$ CIs, \ie the set \[\mathcal{S}_{\text{CI}}:=\{ S^*_{\alpha,j}~:~j=1,\hdots,q~\text{and}~S^*_{\alpha,j} \subset \WC \},\] then the physical targets are given by the union of the inflicted consequences associated with each CI, \ie $\cup_{j=1}^q \Con$ where $(S^*_{\alpha,j},\Con)$ is a CI-tuple of an existing MCA. 
\end{enumerate}

To derive defense strategies against~$\WC$, we proceed as follows. 
\begin{enumerate}
\item If the MCA~$\WC$ is a CI, then the best defense strategy is to defend any substation. 
\end{enumerate}

This defense will render the attack ineffective, which we justify next.
\begin{proposition} \label{prop:1}
(Defense against strongly correlated MCAs) Let $\WC$ denote a strongly correlated MCA. If the operator protects measurements at any substation substation~$s^*_k$ such that~$s^*_k \in \WC$, the attack~$S_{\alpha,j} \setminus \{s^*_k\}$ becomes ineffective.
\end{proposition}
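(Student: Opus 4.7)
The plan is to argue by contradiction using the minimality property that defines a strongly correlated MCA. By Definition of a Correlation Index, $S^*_{\alpha,j}$ is extracted from an optimal solution $\delta^*$ of the bilevel program~\eqref{eq:AttackTemplate} with $\|\delta^*\|_0 = \kappa^*$, where $\kappa^*$ is the security index associated with the attack goal $G_{\alpha,j}$. In particular, $|S^*_{\alpha,j}| = \kappa^*$, i.e., $\kappa^*$ is the minimum number of substations an adversary must attack in order to inflict the consequences described by $G_{\alpha,j}$.

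First, I would fix any $s_k^* \in S_{\alpha,j}$ and set $S' := S_{\alpha,j} \setminus \{s_k^*\}$, so that $|S'| = \kappa^* - 1$. I would then observe that protecting measurements at $s_k^*$ is encoded in model~\eqref{eq:MeasCorrup} by forcing $\delta_{s_k^*} = 0$, i.e., $a_i = 0$ for all $i \in V_{s_k^*}$. Hence, any attack carried out through $S'$ corresponds to a feasible $\delta'$ of~\eqref{eq:AttackTemplate} with $\|\delta'\|_0 \leq \kappa^* - 1$.

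Next, suppose for contradiction that $S'$ is effective, i.e., that the corresponding attack signal $a'$ and the resulting deceived dispatch $P_g^+(a') \in \Omega(\tilde P_d(a'))$ still satisfy the attack goal $G_{\alpha,j}$ in~\eqref{eq:AttackGoal}. Then $(\delta',a',P_g^+(a'),\kappa^*-1)$ would be a feasible point of~\eqref{eq:AttackTemplate} with objective value strictly smaller than $\kappa^*$, contradicting the optimality of $\kappa^*$ established in the definition of $S^*_{\alpha,j}$. Therefore $S' = S_{\alpha,j} \setminus \{s_k^*\}$ cannot reach $G_{\alpha,j}$ and the attack becomes ineffective.

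The main obstacle, and the point I would be careful about, is making sure that defending $s_k^*$ is correctly translated into the optimization model~\eqref{eq:AttackTemplate}: protection of a substation corresponds precisely to clamping the associated $\delta_{s_k^*}$ to zero in~\eqref{eq:MeasCorrup}, so that every remaining attack through $S_{\alpha,j}\setminus\{s_k^*\}$ is indeed a feasible instance of the same bilevel program with one fewer nonzero in $\delta$. Once this correspondence is stated cleanly, the proof reduces to a one-line minimality argument against $\kappa^*$.
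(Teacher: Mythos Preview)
Your argument is correct and is essentially the same as the paper's: both proceed by contradiction, assuming $S' = S_{\alpha,j}\setminus\{s_k^*\}$ is effective and deriving a correlated MCA of cardinality strictly less than $\kappa^*$, contradicting the minimality that defines a strongly correlated MCA. You add some extra bookkeeping about how defense is encoded via $\delta_{s_k^*}=0$ in~\eqref{eq:MeasCorrup}, but the core one-line minimality argument is identical to the paper's.
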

\begin{proof}
See Appendix.
\end{proof}

\begin{enumerate}
  \setcounter{enumi}{1}
  \item If the MCA~$\WC$ is a weakly correlated MCA that contains the set of CIs $\mathcal{S}_{\text{CI}}$, then we may have one of the following cases.
\end{enumerate}

\noindent \textbf{Case I:} If $\cap_{j=1}^q S^*_{\alpha,j} \neq \emptyset$, then the best defense strategy is to protect measurements at substation~$s^*_{k}$ that satisfies~$s^*_{k} \in \cap_{j=1}^q S^*_{\alpha,j}$, which we justify next.
\begin{proposition}
(Defense against a set of strongly correlated MCAs with non-empty intersection) Let $\WC$ denote a weakly correlated MCA that contains the set of CIs $\mathcal{S}_{\text{CI}}$. Suppose these CIs satisfy $\cap_{j=1}^q S^*_{\alpha,j} \neq \emptyset$. If the operator protects measurements at a substation~$s^*_{k}$ such that~$s^*_{k} \in \cap_{j=1}^q S^*_{\alpha,j}$, the attack~$S_{\alpha,j} \setminus \{s^*_k\}$ becomes ineffective. 
\end{proposition}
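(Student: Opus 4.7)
The plan is to reduce Proposition~2 to Proposition~1 applied simultaneously to each of the $q$ strongly correlated MCAs contained in $\WC$. The key observation is that if $s^*_k$ lies in the intersection $\cap_{j=1}^q S^*_{\alpha,j}$, then by definition $s^*_k$ belongs to every individual CI in $\mathcal{S}_{\text{CI}}$, so protecting this one substation simultaneously breaks all $q$ minimal attacks. This lets us piggyback on the single-CI result rather than re-deriving anything from the bilevel program.

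First, I would enumerate the CIs contained in $\WC$, namely $\mathcal{S}_{\text{CI}} = \{S^*_{\alpha,1},\ldots,S^*_{\alpha,q}\}$, and recall that by the reasoning engine the only physical targets reachable through $\WC$ are $\cup_{j=1}^q \Con$, each of which is attained via the corresponding CI $S^*_{\alpha,j}$. Next, I would invoke Proposition~1 once for each $j$: since the intersection hypothesis guarantees $s^*_k \in S^*_{\alpha,j}$ for every $j$, each residual attack $S^*_{\alpha,j} \setminus \{s^*_k\}$ is rendered ineffective at achieving its associated consequence $\Con$. Finally, I would argue that the full residual attack $\WC \setminus \{s^*_k\}$ cannot reach any target in $\cup_{j=1}^q \Con$: any successful route would require $\WC \setminus \{s^*_k\}$ to contain some CI $S^*_{\alpha,j'}$, and since $\WC \setminus \{s^*_k\} \subset \WC$, such a CI would already belong to $\mathcal{S}_{\text{CI}}$; but by hypothesis every member of $\mathcal{S}_{\text{CI}}$ contains $s^*_k$, contradicting $S^*_{\alpha,j'} \subset \WC \setminus \{s^*_k\}$.

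The main obstacle I anticipate is making rigorous the claim that \emph{every} way for the residual attack to achieve one of the consequences in $\cup_{j=1}^q \Con$ must pass through a CI already contained in $\mathcal{S}_{\text{CI}}$. This step relies on the minimality of $\kappa^*$ built into the bilevel program~\eqref{eq:AttackTemplate} together with the enumeration guarantee from~\cite{MoyaCIs} that $\mathcal{S}_{\alpha,j}^*$ collects \emph{all} feasible $\delta^*$ with $||\delta^*||_0 = \kappa^*$. Once that minimality-and-completeness step is stated precisely, the rest is a direct set-theoretic consequence of Proposition~1 and the non-empty intersection hypothesis.
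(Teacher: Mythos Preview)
Your approach is correct and is essentially the paper's own argument: the paper's entire proof of Proposition~2 is the single line ``Follows from Proposition~\ref{prop:1},'' and your plan unpacks exactly that reduction by applying Proposition~1 to each $S^*_{\alpha,j}$ via the hypothesis $s^*_k \in \cap_{j=1}^q S^*_{\alpha,j}$. The additional rigor you supply---namely, the argument that the residual $\WC \setminus \{s^*_k\}$ cannot contain any CI from $\mathcal{S}_{\text{CI}}$ and hence cannot reach $\cup_{j=1}^q \Con$---goes beyond what the paper spells out; the obstacle you flag (completeness of the CI enumeration) is a genuine subtlety that the paper's one-line proof simply leaves implicit.
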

\begin{proof}
Follows from Proposition~\ref{prop:1}.
\end{proof}

\noindent \textbf{Case II:} If $\cap_{j=1}^q S^*_{\alpha,j} = \emptyset$, then the best strategy is to defend all CIs individually, which we justified using Proposition~\ref{prop:1}.

\noindent \textbf{Case III:} Finally, there is an intermediate case in which only some CIs have a non-empty intersection. For this case, a combination of the defense strategies described for Case I and II should be implemented.
\vspace{-\dist}
\section{Numerical Experiments}
In this section, we use numerical experiments to validate our proposed framework. In particular, we compute the false alarm rates for CIG and CKB under different attack scenarios.
\subsection{Experimental Setup}
We describe the experimental environment, the IDS benchmark systems, and the evaluation metric next.  
\subsubsection{Environment}
We model a power grid with $n_s = 6$ substations using the New England 39-bus system illustrated in Fig.~\ref{fig:NewEngland39}. We model the dispatch application using the DC Optimal Power Flow tool from MatPower~\cite{zimmerman2011matpower}. The data used for the power grid and dispatch application corresponds to Matpower base-case data.
\begin{figure}[t]
\centering
\includegraphics[width= .40\textwidth, height = 2.25 in]{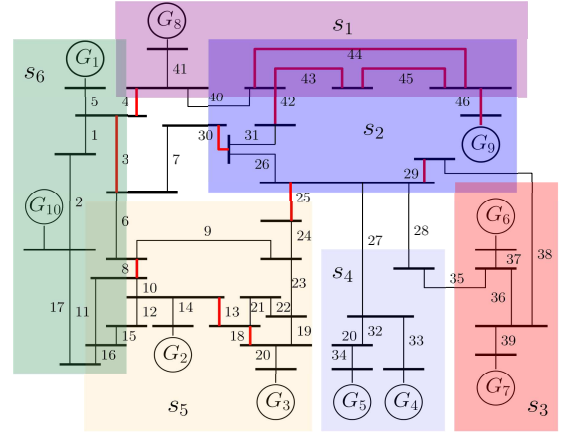}
\caption{New England 39-bus system.}
\label{fig:NewEngland39}
\end{figure}

In our experiments, we used the adversarial environment introduced in~\cite{cardenas2006framework}. This adversarial environment is characterized by a nominal attack rate (or attack intensity) $p_0 \in (0,1)$, which E-blocks estimate as $\hat{p}_0$. 

We model MCAs using a random approach, that is, we selected the corrupted measurements~$\tilde{P}_d(a)$ and target substations~$\WC$ uniformly at random. In particular, $\tilde{P}_d(a)$ was chosen uniformly from the interval $[P_d-\bar{a},P_d+\bar{a}]$ where $\bar{a} = 0.1P_d$. This random approach allowed us to model attack events that are a threat and attack events that are not. 
\subsubsection{Intrusion Detection Systems}
We model E-blocks (or IDS's detector) with the following characteristics. The E-blocks have a detection rate $p_D \in (0,1)$ and a false alarm rate $p_{FA} \in (0,1)$. In our experiments, we selected the values of $p_D = 0.9$, $p_{FA} = 0.1$. The adversary attempts to manipulate the E-blocks' $p_D$, $p_{FA}$, and $\hat{p}_0$ by using the following parameters: 
\begin{itemize}
\item $\delta$:~the maximum deviation under $\hat{p}_0$.
\item $\beta$:~the maximum probability to launch a zero-day (\ie undetectable) attack.
\item $\alpha$:~the maximum probability to intentionally trigger a false alarm. 
\end{itemize} 
In the simulation, we selected the values $\delta = 0.1\hat{p}_0$, $\beta = 0.2$, $\alpha = 0.1$, and $\hat{p}_0 \in \{0.25,0.1,0.05\}$.

We model two benchmark IDSs, a simple IDS (IDS-1) and a Bayesian IDS (IDS-2). IDS-1 has the following working principle. If the E-blocks trigger an alarm, IDS-1 will label the event as an intrusion. IDS-2, on the other hand, has the following working principle. An event is labeled as an intrusion based on $\mathbb{P}(\text{Intrusion}|\text{Alarm})$, \ie the probability of intrusion given that an alarm has been triggered. This probability is computed as follows
\begin{align*} 
\mathbb{P}(I|A) = \frac{\mathbb{P}(A|I) \mathbb{P}(I)}{\mathbb{P}(A|I) \mathbb{P}(I)+\mathbb{P}(A|\neg I) \mathbb{P}(\neg I)},
\end{align*}
where $A$ denotes the alarm and $I$ intrusion. Since $ \mathbb{P}(I) = \hat{p}_0$, $ \mathbb{P}(A|I) = p_D$, $ \mathbb{P}(\neg I) = 1-\hat{p}_0$, and $ \mathbb{P}(A|\neg I) = p_{FA}$; we write $\mathbb{P}(I|A)$ as
\begin{align} \label{eq:PPV2}
\mathbb{P}(I|A) = \frac{p_D \hat{p}_0}{(p_D - p_{FA})\hat{p}_0 + p_{FA}},
\end{align}
which is also known as the \textit{Bayesian detection rate}~\cite{cardenas2006framework}.

To model CKB and CIG, we proceed as follows. For CKB, we computed CI-tuples for each experiment using CVX and Gurobi, packages for specifying and solving convex and mix-integer programs~\cite{cvx}. CIG detects possible threats based on deviation from the pseudo-measurements ~$P_d^{\text{pre}}$, which are generated from a uniform distribution in $[0.9P_d,1.1P_d]$. We assume no redundant measurements are available for CIG to replace the corrupted measurements. Nevertheless, if they are available, the false alarms (for CIG) will tend to 0.

CKB and CIG will label an incoming MCA as a threat, if the attack can increase the flow~$\tau_l = 15\%$ in any of the following target lines $L = \{3,4,13,18,25,29,30,42,43,44,45,46\}$ (see Fig.~\ref{fig:NewEngland39}). This requires for CKB to have CI-tuples for each line $l \in L$..

\subsubsection{Metrics}
The performance of the benchmark IDSs and the proposed framework is measured by the false negative rate $\text{FNR} := \text{FN}/ (\text{TP}+\text{FN})$,
where FN denotes the false negatives (\ie failure of generating an alarm) and TP the true positives (\ie success of generating an alarm correctly), and the false positive rate $\text{FPR}:= \text{FP}/(\text{TN}+\text{FP})$, where FP denotes the false positives (\ie generating a false alarm) and TN the true negatives (\ie stay silent when there is no event). 

We further define these metrics for intrusions that are not a threat (\ie \textit{ineffective attacks}) and for intrusions that are a \emph{threat} (denoted as $\text{FNR}_t$ and $\text{FPR}_t$). Since IDS-1 and IDS-2 are not capable of estimating attack consequences and determining possible threats, we compute $\text{FNR}_t$ and $\text{FPR}_t$ only for the proposed framework. All metrics are evaluated through a large sample of events using the pseudo-code algorithm described in Appendix~B.
\subsection{Experimental Results} 
\subsubsection*{Experiment I} \textit{False Alarm Rates}. In this experiment, we computed the FNR and FPR for IDS-1, IDS-2, and CKB/CIG. We used the pseudo-code to simulate $M = 10^2$ experiments of $N= 10^3$ attack/normal events. Fig.~\ref{fig:FNR} shows the FNRs and Fig.~\ref{fig:FPR} the FPRs (using box plots) for the attack rates $\hat{p}_0 \in \{0.25,0.05\}$. 

For the FNR case, the results show that for both $\hat{p}_0 = 0.25$ and $\hat{p}_0 = 0.05$, CKB/CIG outperforms IDS-2 but not IDS-1. This is because CKB and CIG label an event as an intrusion if and only if the event threatens the power grid. As a result, ineffective attacks are not labeled as intrusions, which increases the number of false negatives. If instead of computing the FNR for intrusions, we compute the FNR for threats (\ie $\text{FNR}_t$), then we will see how CKB and CIG outperform IDSs with no contextual information, which we describe in Experiment II.

For the FPR case, the results show that for $\hat{p}_0 = 0.25$, CKB/CIG performs worse than for IDS-1 and IDS-2. In a more friendly environment, \ie when $\hat{p}_0 = 0.05$, CKB/CIG outperforms IDS-1 but not IDS-2. This is because (i) the fast screening of CKB increases the number of false positives in a less friendly environment and (ii) CKB is sensitive to the number of critical targets (\ie the cardinality of $L$), which we describe in Experiment III.
 
\begin{figure}[h!]
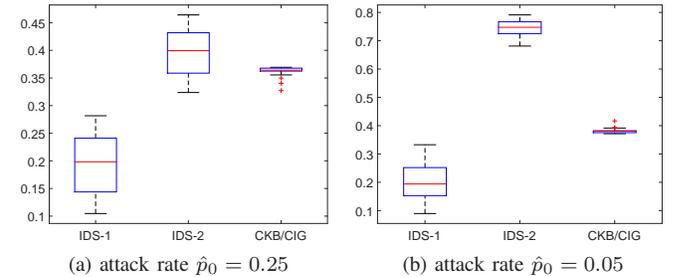
%
    \centering
    \subfloat[attack rate $\hat{p}_0 = 0.25$]{{\includegraphics[width= .225\textwidth, height = 1.25 in]{fig/FNR_1q4.eps} }}%
    ~
    \subfloat[attack rate $\hat{p}_0 = 0.05$]{{\includegraphics[width= .225\textwidth, height = 1.25 in]{fig/FNR_1q20.eps} }}%
    \caption{FNR for IDS-1, IDS-2, and CKB/CIG}%
    \label{fig:FNR}%
\end{figure}
\begin{figure}[h!]
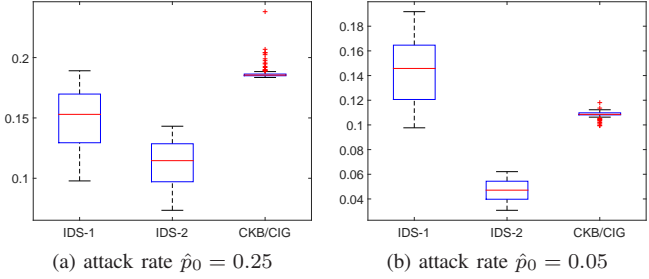
%
    \centering
    \subfloat[attack rate $\hat{p}_0 = 0.25$]{{\includegraphics[width= .225\textwidth, height = 1.25 in]{fig/FPR_1q4.eps} }}%
    ~
    \subfloat[attack rate $\hat{p}_0 = 0.05$]{{\includegraphics[width= .225\textwidth, height = 1.25 in]{fig/FPR_1q20.eps} }}%
    \caption{FPR for IDS-1, IDS-2, and CKB/CIG}%
    \label{fig:FPR}%
\end{figure}
\subsubsection*{Experiment II} \textit{Threat Analysis}. 
In this experiment, we computed the FNR for threats~(\ie $\text{FNR}_t$). A false negative occurs if the random MCA was a threat for the power grid but CKB and CIG determined that it was not a threat. Fig.~\ref{fig:FNR_Threat} shows the FNRs for the attack rates $\hat{p}_0 \in \{0.25,0.1,0.05\}$. As expected, the contextual information used by CKB and CIG considerably decreases the FNR for threats.

\begin{figure}[t]
\centering
\includegraphics[width= .30\textwidth, height = 1.5 in]{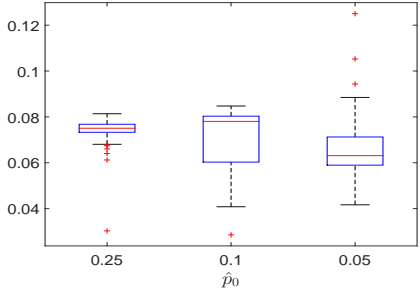}
\caption{$\text{FNR}_t$-Threat analysis.}
\label{fig:FNR_Threat}
\end{figure}
\subsubsection*{Experiment III} \textit{Sensitivity Analysis}. In this experiment, we studied the sensitivity of $\text{FPR}_t$ to the cardinality of $L$. Table~\ref{table:sensitivity} shows that the average $\text{FPR}_t$, denoted as $\mu(\text{FPR}_t)$, decreases as the number of critical/target lines decreases. This is because, as the number of critical lines decreases, the number of CI-tuples stored in CKB decreases too. As a result, the fast scanning feature of CKB will be less prone to false positives. 

Note that there is a trade-off between the number of critical targets selected and the maximum $\text{FPR}_t$ allowed, which should be adjusted based on risk assessment or experience. A different solution would be to always use CIG. This, however, will greatly increase the runtime of our proposed framework.

{\renewcommand{\arraystretch}{1.2}
\begin{table}[!t]
\caption{Sensitivity of $\text{FPR}_t$.}
\label{table:sensitivity}
\centering
\begin{tabular}{c c c} 
\hline  
 $\tau_l$ & $L$ & $\mu(\text{FPR}_t)$ \\  
\hline   
10 \% & $\{3,4,13,18,25,30,42,44,45,46\}$ & 0.1806 \\ 
20 \% & $\{3,4,13,18,25,30,36,42,43,44,45,46\}$ & 0.1774 \\
20 \% & $\{3,4,25,30,45,46\}$ & 0.0625 \\
30 \% & $\{4,13,18,25,42,45,46\}$ & 0.0646 \\
30 \% & $\{25,46\}$ & 0.0351 \\
40 \% & $\{13\}$ & 0.0190 \\
 \hline
\end{tabular}
\end{table}}

\section{Conclusion}
In this paper, we developed a semantic analysis framework for Intrusion Detection Systems~(IDS) against Monitor-Control Attacks (MCA) in power grids. The framework has two parts running in parallel with IDS: A Correlation Index Generator~(CIG) that analyzes the correlation of potential hostile behaviors and indexes these behaviors, and a Correlation Knowledge-Base~(CKB) that is updated with the Indices generated by CIG. The performance of the proposed framework is evaluated under different attack scenarios in a cyber-physical setting. It is shown that the proposed framework is capable of detecting MCA and estimating attack consequences with promising runtime and detection accuracy. In addition, the experiments show that the detection outcome of the proposed framework is sensitive to both the size and locations of attack goals. Future work includes developing methods, which adapt CKB parameter settings to attack activities, to achieve an optimal trade-off between the FNR/FPR and detection runtime.   
\appendices
\section{Proof of Proposition~\ref{prop:1}}
\begin{proof}
Suppose, to get a contradiction, that $S':=\WC \setminus \{s_k^*\}$ is an effective MCA. Thus, $S' \subset \WC$ is a correlated MCA with cardinality $|S'|< |\WC|=: \kappa^*$, which contradicts the fact that $\WC$ is a strongly correlated MCA, \ie a CI with minimum cardinality. This proves the proposition.
\end{proof}
\section{Pseudo-Code}
We use Algorithm~\ref{alg:1} to compute the FNR/$\text{FNR}_t$ and FPR/$\text{FPR}_t$ for CIG and CKB. Some remarks on Algorithm~\ref{alg:1} are the following. (i) The if-conditionals describe how the E-blocks, CKB, and CIG interact during normal/attack events. (ii) Algorithm~\ref{alg:1} describes attacks at the grid level, that is, either the grid is under attack~$\WC$ or not. We remark, however, that it can be easily adapted to model attacks at the substation level, that is, individual substations are under attack or not. (iii)~Finally, by making the appropriate changes, Algorithm~\ref{alg:1} can compute the FNR and FPR for IDS-1 and IDS-2.
\begin{algorithm}[h!]
\footnotesize
\caption{Deriving FNR and FPR for CKB/CIG} \label{alg:1}
\begin{algorithmic}[1]
\State $(\text{FNR},\text{FPR}) \gets$ Rates( )
\Procedure{Rates}{ }  
\For{$k = 1$ to $M$}  \Comment{M: number of experiments}
\For{$i = 1$ to $N$}  \Comment{N: number of attack/normal events}
\State Select $p_1 \in [\hat{p}_0-\delta,\hat{p}_0+\delta]$
\State Select $p_2 \in [0,\beta]$
\State Select $p_3 \in [0,\alpha]$
\State $\text{Attack}(i) \gets \text{Bernoulli}(p_1)$   \Comment{Initiate attack/normal event}
\If{$\text{Attack}(i) = 1$}   \Comment{Attack event}
\State $[\WC, \tilde{P}_d(a)] \gets$ RandomMCA
\State ZD $\gets$ Bernoulli($p_2$) \Comment{ZD: zero-day attack}
\State Alarm $\gets$ Bernoulli($\min \{1-\text{ZD},p_D\}$)
\If{$\text{Alarm} = 1$}
\State [ Threat($i$),isMCA ] $\gets$ CKB($\WC$)
\If{isMCA  = No}
\State Threat($i$) $\gets$ CIG($\WC,\tilde{P}_d(a)$)
\EndIf
\Else          \Comment{Zero-day attack case}        
\State Threat($i$) $\gets$ CIG($\emptyset,\tilde{P}_d(a)$)
\EndIf
\Else  \Comment{Normal event}
\State FA $\gets$ Bernoulli($p_3$) \Comment{FA: false alarm}
\State Alarm $\gets$ Bernoulli($\max \{\text{FA},p_{FA}\}$)
\If{Alarm = 0}
\State Threat($i$) $\gets$ CIG($\emptyset,\tilde{P}_d$)
\Else                           \Comment{False alarm case}
\State $[\WC, \tilde{P}_d] \gets$ RandomMCA  
\State [ Threat($i$),isMCA ] $\gets$ CKB($\WC$)
\If{isMCA  = No}
\State Threat($i$) $\gets$ CIG($\WC,\tilde{P}_d$)
\EndIf
\EndIf
\EndIf
\EndFor
\State FN, FP, TN, TP $\gets$ from Attack and Threat
\State Compute FNR($k$) and FPR($k$) 
\EndFor
\State \textbf{return}(FNR,FPR)
\EndProcedure
\end{algorithmic}
\end{algorithm}
\bibliographystyle{IEEEtran}
\bibliography{bib/cm_ref}
\end{document}